\def\fnum@figure{\figurename\thefigure}
\renewcommand{\figurename}{Fig.}
\newcommand{\sech}{\textrm{sech}}
\newtheorem{theorem}{Theorem}
\newtheorem{lemma}[theorem]{Lemma}
\journal{Physica A}
\title{Measurement-induced nonlocality quantified by Hellinger distance and weak measurements}
\begin{document}
\author{Indrajith V S$^*$ , R. Muthuganesan$^\ddag$ R. Sankaranarayanan$^*$ }
\address{$^*$ Department of Physics, National Institute of Technology\\ Tiruchirappalli - 620015, Tamil Nadu, India.}
\address{$^\ddag$ Centre for Nonlinear Science and Engineering, School of Electrical and Electronics Engineering,  SASTRA Deemed University, Thanjavur - 613401, Tamil Nadu, India.}
\begin{abstract}
In this article, we propose measurement-induced nonlocality (MIN) quantified by Hellinger distance using von Neumann projective measurement. The proposed MIN is a bonafide measure of nonlocal correlation and is resistant to local ancilla problem. We obtain an analytical expression of Hellinger distance MIN for general pure and $2 \times n$ mixed states. In addition to comparing with similar measures, we explore the role of weak measurement in capturing nonlocal correlation.

\end{abstract}
\begin{keyword}
Entanglement, Measurement-Induced Nonlocality, Hellinger distance, Weak measurements, Sequential measurements.


\end{keyword}


\maketitle

\section{Introduction}
Strange correlation between different parts of a quantum system that cannot be described by a local hidden variable theory, and violates Bell inequality is referred to as nonlocality \cite{bel}. Nonlocal correlation is one of the key features indeparting quantum system from the corresponding classical counterpart. Quantification of nonlocal correlation takes quantum information processing to next level and is used in various areas such as teleportation \cite{telepor}, communication etc. Quantification of nonlocal correlation had been centred around entanglement for quite a long time until the introduction of quantum discord \cite{discord}. This entropic measure captures nonlocality beyond entanglement and Bell’s test of inequality. Later development in this direction happened when the geometric measure of quantum discord \cite{geo_discord} was proposed, as it is much easier to compute than the entropic measure. Lu and Fu introduced measurement-induced nonlocality (MIN) \cite{min} which is also a geometric measure of nonlocal correlation.

MIN characterizes the nonlocal behaviour by employing statistical distance between a quantum state and its local invariant projective measured state. This quantity can be regarded as complementary to the geometric measure of quantum discord. Though MIN could not resolve the local ancilla problem \cite{piani}, it can be resolved by replacing density matrix with its square root \cite{tracesqroot}. Different forms of MIN have also been investigated using skew information \cite{minskew}, trace distance \cite{tmin}, von Neumann entropy \cite{von_min}, relative entropy \cite{rel_entropy_min}, fidelity \cite{fmin} etc. A handful of studies of MINs are made on Heisenberg spin models and noisy channels to show that MINs demonstrate nonlocal correlation in bipartite states even without entanglement \cite{muthu1, my}.

In this article, we propose a MIN quantified by Hellinger distance (H-MIN). Similar to the other MINs, this is also a distance measure taken between pre- and post-measurement states based on von Neumann projective measurement. Being a bonafide measure of nonlocal correlation, it resolves the local ancilla problem. In addition to demonstrating H-MIN for few well known bipartite states, its direct connection to other MINs are also shown.

The notion of weak measurements was introduced by Aharonov, Albert and Vaidman (AAV) \cite{weak1} in 1989, wherein the measurements are done without complete disintegrating of a quantum state to its eigenstate. These kinds of measurements possess some strange behaviour which AAV explains in their article. The notion of weak measurement is considered in the context of nonlocal correlation by proposing super quantum discord \cite{super_discord}, which possesses a higher level of correlation than quantum discord. Later, weak measurements on geometric discord \cite{weak_geo, coh_review} and quantum correlations \cite{weak_corltn, weak_corltn2} are introduced, from which geometric discord is obtained as a special case. In light of this, here we introduce the notion of sequential weak measurement and its connection to the post-measured state.
\section{Measurement-Induced Nonlocality (MIN)}
It is a correlation measure of bipartite quantum state in the geometric perspective to capture non-local effect due to invariant local projective measurements \cite{min}. This quantity may be considered as dual to geometric quantum discord \cite{geo_discord}, and is defined as 
\begin{equation}
N_2(\rho):=~ ^{\text{max}}_{\Pi^a}\Vert\rho-\Pi^a(\rho)\Vert^2
\end{equation}
where $\Vert \mathcal{O} \Vert = \sqrt{\text{tr}(\mathcal{O}^{\dagger}\mathcal{O})}$ is the Hilbert-Schmidt norm of an operator $\mathcal{O}$. Here the maximum is taken over all possible locally invariant von Neumann projective measurements $\Pi^a=\lbrace\Pi^a_k\rbrace = \lbrace\vert k \rangle \langle k \vert \rbrace ~\text{and}\,\Pi^a(\rho) = \sum_k(\Pi_k^a \otimes \mathbb{I}^b)\rho(\Pi_k^a \otimes \mathbb{I}^b)$.

An arbitrary state of a bipartite $m \times n$ dimensional composite system can be written as
\begin{equation}
\rho= \sum_{i,j}\gamma' _{ij}X_{i}\otimes  Y_{j} \label{state1}.
\end{equation}
Here \{$X_i : i = 1, 2, \cdots, m^2\}$ and \{$Y_j : j = 1, 2, \cdots, n^2\}$ are set of self-adjoint orthonormal operators on Hilbert spaces $\mathcal{H}^a ~\text{and} ~\mathcal{H}^b$ respectively such that $\text{tr}(X_i^{\dagger}X_j) = \text{tr}(Y_i^{\dagger}Y_j)= \delta_{ij}$ with $X_1 = \mathbb{I}/\sqrt{m}$, $Y_1 = \mathbb{I}/\sqrt{n}$, and $\gamma' _{ij} =\hbox{tr} (\rho\,X_{i}\otimes Y_{j})$ is an element of $m^2 \times n^2$ real matrix. The state $\rho$ can also be written as  
\begin{equation}
\rho =\frac{1}{\sqrt{m n}}\frac{\mathbb{I}^{a}}{\sqrt{m}}\otimes \frac{\mathbb{I}^{b}}{\sqrt{n}}+\sum_{i=2}^{m^2 }x_{i}X_{i}\otimes\frac{\mathbb{I}^{b}}{\sqrt{n}}+\frac{\mathbb{I}^{a}}{\sqrt{m}}\otimes\sum_{j=2}^{n^2 }y_{j}Y_{j} +\sum_{i,j\neq 1}t_{ij }X_{i}\otimes Y_{j} \label{state2}
\end{equation}
where   $x_{i}=\hbox{tr}(\rho\,X_{i}\otimes \mathbb{I}^{b})/\sqrt{m}$, $y_{j}= \hbox{tr}(\rho\,\mathbb{I}^{a} \otimes Y_{j} )/\sqrt{n}$ and $T = (t _{ij} = \hbox{tr} (\rho\,X_{i}\otimes  Y_{j}))$ is a real correlation matrix of order $(m^{2}-1)\times(n^2 -1)$.

In fact, MIN is easy to compute and it has a closed formula for  $2\times n$ dimensional system as
\begin{equation}
N_{2}(\rho)= \begin{cases}
\text{tr}(TT^t)-\frac{1}{\Vert \textbf{x} \Vert^2}\textbf{x}^tTT^t\textbf{x} &  \text{if} \quad \textbf{x}\neq 0, \\
\text{tr}(TT^t)-\lambda_{\text{min}} &  \text{if} \quad \textbf{x}=0, 
\end{cases}
\end{equation}
where $\lambda_{\text{min}}$ is the least eigenvalue of $3\times 3$ dimensional matrix $TT^t$ and $\textbf{x} = (x_1~x_2~ x_3)^t $. 
\section{MIN Based on Hellinger Distance}
Hellinger distance is another useful geometric measure between two probability distributions. Replacing the probability distributions by density matrices, it becomes a metric in state space. It is a useful quantifier for the manifestation of quantum nonlocality. Hellinger distance \cite{heli_dist} between two states $\rho$ and $\sigma$ is given as $\mathcal{D}_H = \text{tr} (\sqrt{\rho}-\sqrt{\sigma})^2$.

Considering a bipartite system (\ref{state1}), the measurement-induced nonlocality based on Hellinger distance (H-MIN) is defined as
\begin{equation}
N_{H}(\rho) :=~^\text{max}_{\Pi^{a}}\lVert \sqrt{\rho}-\Pi^{a}(\sqrt{\rho})\rVert^2\label{first}
\end{equation}
or equivalently
\begin{equation}
N_{H}(\rho) :=~^\text{max}_{\Pi^{a}}~\text{tr}~\left(\sqrt{\rho}-\Pi^{a}(\sqrt{\rho})\right)^2
\end{equation}
where the maximum is taken over locally invariant von Neumann projective measurements on subsystem $a$. With this definition, the H-MIN can then be computed as
\begin{equation}
N_{H}(\rho) = \left(1- ~^\text{min}_{\Pi^{a}}~\text{tr}[\sqrt{\rho}~\Pi^{a}(\sqrt{\rho})]\right)\label{hmin}
\end{equation}
whose properties are as given below:
\begin{enumerate}
  \item $N_H(\rho) \geq 0$, with equality holds only for product state $\rho = \rho_a \otimes \rho_b$.
  \item $N_H$ is invariant under the addition of local ancilla $\rho^c$. Defining $ \rho^{a:bc} = \rho^{ab}\otimes \rho^c$, the H-MIN is
  \begin{align*}
    N_H(\rho^{a:bc}) &= \,^{\text{max}}_{\Pi^a}\,\text{tr}\bigg(\sqrt{\rho^{ab}\otimes\rho^c}-\Pi^a\left(\sqrt{\rho^{ab}\otimes\rho^c}\right)\bigg)^2\\
    &=\, ^{\text{max}}_{\Pi^a}\,\text{tr}\bigg(\sqrt{\rho^{ab}}-\Pi^a\left(\sqrt{\rho^{ab}}\right)\bigg)^2 \cdot \text{tr}(\rho^c)
    \end{align*}
     where we use the property of trace operation $\text{tr}(a \otimes b) = \text{tr}(a) \cdot \text{tr}(b)$ in the last step. It implies that
    \begin{align*}
    N_H(\rho^{a:bc}) &= N_H(\rho^{ab})\cdot\text{tr}(\rho^c) \\
    &= N_H(\rho^{ab}) 
    \end{align*}   
and thus fixing the local ancilla problem\cite{piani}.
  \item For any unitary operators $U$ and $V$, $N_H \big((U \otimes V)\rho(U \otimes V)^\dagger\big) =     N_H(\rho)$ such that $N_H (\rho)$ is invariant under local unitary operation. To prove this, it is enough to show that $\text{tr}[\sqrt{\rho}\,\Pi^a(\sqrt{\rho})]$ is invariant under local unitary operations as shown below. Using the cyclic property of trace, the following quantity 
 \end{enumerate}
   \begin{align*}
    \text{tr} [(U\otimes V)\sqrt{\rho}(U\otimes V)^\dagger\,\Pi^a \otimes \mathbb{I}\,(U\otimes V)\sqrt{\rho}(U\otimes V)^\dagger\,\Pi^a \otimes \mathbb{I}]
    \end{align*}
    is equal to
     \begin{align*}
     \text{tr} [(U\otimes V)^\dagger\,\Pi^a \otimes \mathbb{I}(U\otimes V)\sqrt{\rho}(U\otimes V)^\dagger\,\Pi^a \otimes \mathbb{I}\,(U\otimes V)\sqrt{\rho}].
    \end{align*}
    Since von-Neumann projectors are invariant under local unitary operations, the above quantity can be written as
    \begin{align*}
           \text{tr} [\Pi^a \otimes \mathbb{I}\sqrt{\rho}\,\Pi^a \otimes \mathbb{I}\,\sqrt{\rho}] 
          &= \text{tr} [\sqrt{\rho}\,\Pi^a \otimes \mathbb{I}\sqrt{\rho}\,\Pi^a \otimes \mathbb{I}]\\
          &=  \text{tr} [\sqrt{\rho}\,\Pi^a(\sqrt{\rho})].
    \end{align*}
\section{H-MIN for pure state}
\begin{theorem}
 For any bipartite pure state with Schmidt decomposition $\lvert \psi \rangle = \sum_i \sqrt{s_i}\lvert\alpha_i\rangle\otimes\lvert\beta_i\rangle$ with $s_i$ being the Schmidt co-efficients, $\lvert\alpha_i\rangle$ and $\lvert\beta_i\rangle$ are the orthonormal bases
\begin{equation}
  N_H(\lvert \psi \rangle \langle \psi \rvert) = \left(1-\sum_i s^{2}_i\right).
\end{equation}
\end{theorem}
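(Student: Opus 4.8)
The plan is to exploit the fact that a pure-state density matrix $\rho = \lvert\psi\rangle\langle\psi\rvert$ is itself a rank-one projector, so $\rho^2 = \rho$ and hence $\sqrt{\rho} = \rho$. Substituting this into the working formula (\ref{hmin}) collapses the problem to
$N_H(\lvert\psi\rangle\langle\psi\rvert) = 1 - \,^{\text{min}}_{\Pi^a}\,\text{tr}[\rho\,\Pi^a(\rho)]$, so the operator square root never has to be handled directly and only a second-moment calculation remains.

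Next I would expand the post-measurement term. Writing $\Pi^a = \{\lvert k\rangle\langle k\rvert\}$ for an orthonormal basis $\{\lvert k\rangle\}$ of $\mathcal{H}^a$ and setting $\lvert\psi_k\rangle = (\lvert k\rangle\langle k\rvert\otimes\mathbb{I})\lvert\psi\rangle$, one has $\Pi^a(\rho) = \sum_k\lvert\psi_k\rangle\langle\psi_k\rvert$ and therefore $\text{tr}[\rho\,\Pi^a(\rho)] = \sum_k\lvert\langle\psi\vert\psi_k\rangle\rvert^2$. Inserting the Schmidt form $\lvert\psi\rangle = \sum_i\sqrt{s_i}\lvert\alpha_i\rangle\otimes\lvert\beta_i\rangle$ and using orthonormality of the $\lvert\beta_i\rangle$ gives the real nonnegative number $\langle\psi\vert\psi_k\rangle = \sum_i s_i\,\lvert\langle k\vert\alpha_i\rangle\rvert^2$, so that $\text{tr}[\rho\,\Pi^a(\rho)] = \sum_k\big(\sum_i s_i\,\lvert\langle k\vert\alpha_i\rangle\rvert^2\big)^2$.

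The crucial step is to use local invariance of the admissible measurements, i.e.\ $\Pi^a(\rho_a) = \rho_a$ with $\rho_a = \text{tr}_b\rho = \sum_i s_i\lvert\alpha_i\rangle\langle\alpha_i\rvert$. Since $\Pi^a(\rho_a) = \sum_k\langle k\vert\rho_a\vert k\rangle\,\lvert k\rangle\langle k\rvert$ is automatically diagonal in $\{\lvert k\rangle\}$, invariance forces $\{\lvert k\rangle\}$ to be an eigenbasis of $\rho_a$. For such a basis each $\lvert k\rangle$ lies in a single eigenspace of $\rho_a$, with eigenvalue $s_k$ say; then $\langle k\vert\alpha_i\rangle$ vanishes unless $s_i = s_k$ and $\sum_i\lvert\langle k\vert\alpha_i\rangle\rvert^2 = 1$ on that eigenspace, giving $\sum_i s_i\lvert\langle k\vert\alpha_i\rangle\rvert^2 = s_k$ and hence $\text{tr}[\rho\,\Pi^a(\rho)] = \sum_k s_k^2 = \sum_i s_i^2$ for every admissible $\Pi^a$. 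The minimisation is therefore vacuous and $N_H(\lvert\psi\rangle\langle\psi\rvert) = 1 - \sum_i s_i^2$.

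The main obstacle I anticipate is purely the bookkeeping around degenerate Schmidt coefficients, together with the possibility that $\dim\mathcal{H}^a$ exceeds the Schmidt rank so that some $\lvert k\rangle$ are orthogonal to the support of $\rho_a$ (for which $s_k$ should simply be read as $0$). One must argue carefully that, although the admissible basis is then not unique, every choice is still an eigenbasis of $\rho_a$ and produces the same contraction $\sum_i s_i\lvert\langle k\vert\alpha_i\rangle\rvert^2 = s_k$, so the closed form is unaffected. Everything else is routine linear algebra.
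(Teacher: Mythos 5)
Your proposal is correct and follows essentially the same route as the paper's proof: exploit $\sqrt{\rho}=\rho$ for a pure state, use the local-invariance constraint $\Pi^a(\rho^a)=\rho^a$ to force the measurement basis to be an eigenbasis of $\rho^a$, and evaluate $\text{tr}[\rho\,\Pi^a(\rho)]=\sum_k s_k^2$ so that the minimisation is vacuous. Your treatment of degenerate Schmidt coefficients and of directions outside the support of $\rho^a$ is somewhat more careful than the paper's, but it is the same argument.
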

\begin{proof}
Taking the density operator as $\rho = \sum_{ij}\sqrt{s_is_j}\lvert\alpha_i \rangle \langle\alpha_j \rvert \otimes \lvert\beta_i \rangle \langle\beta_j \rvert$,
the von Neumann projective measurements acting on the reduced state $\rho^a$ leaves the state invariant, that is $\Pi^a(\rho^a) = \rho^a$. Taking the set $ \{ \Pi^a_{k} \} = \{U\lvert\alpha_k \rangle \langle\alpha_k \rvert U^{\dagger} \}$ for any arbitary unitary operator $U$, we have
\begin{equation}
\rho^a = \sum_k  U\lvert\alpha_k \rangle \langle\alpha_k \rvert U^{\dagger}\rho^a U\lvert\alpha_k \rangle \langle\alpha_k \rvert U^{\dagger} = \sum_k s_k U \lvert \alpha_k \rangle \langle \alpha_k \rvert U ^\dagger
\end{equation}
since $s_k = \langle \alpha_k\lvert U^{\dagger}\rho^a U\lvert \alpha_k\rangle $, the eigenvalues of $\rho^a$. If the post-measurement state is given by
\begin{align*}
  \Pi^a(\rho) &= \sum_{ijk}\sqrt{s_i s_j}\,U \lvert \alpha_k \rangle \langle \alpha_k\lvert U^{\dagger}\rvert  \alpha_i\rangle \langle \alpha_j\lvert U\rvert  \alpha_k \rangle \langle \alpha_k\rvert  U^{\dagger}\otimes\lvert \beta_i\rangle \langle\beta_j \rvert 
  \end{align*}
  then
\begin{eqnarray*}
  \rho\,\Pi^a(\rho) &=& \sum_{i'j'ijk}\sqrt{s_i s_j s_{i'} s_{j'}}\,\lvert \alpha_{i'}\rangle \langle \alpha_{j'}\rvert U \lvert \alpha_k \rangle \langle \alpha_k\lvert U^{\dagger}\rvert  \alpha_i\rangle \langle \alpha_j\lvert U\rvert  \alpha_k\rangle  \langle \alpha_k\rvert  U^{\dagger} \\ && \otimes\lvert \beta_{i'}\rangle \langle \beta_{j'}\lvert \beta_i\rangle \langle\beta_j \rvert.
\end{eqnarray*}
Taking trace of the above expression 
\begin{align*}
\hbox{tr}[\rho\,\Pi^a(\rho)] &= 
           \sum_{ijk} s_is_j \langle \alpha_k \rvert U^\dagger \lvert \alpha_j \rangle 
                             \langle \alpha_i \rvert U \lvert \alpha_k \rangle 
                             \langle \alpha_k \rvert U^\dagger \lvert \alpha_i \rangle 
                             \langle \alpha_j \rvert U \lvert \alpha_k \rangle \\
                  &= \sum_k\left( \sum_i {s_i}\langle \alpha_k \rvert U^\dagger \lvert \alpha_i\rangle
                                    \langle \alpha_i \rvert U \lvert \alpha_k \rangle\right)^2 \\
                  &= \sum_k \bigg(\langle \alpha_k \rvert U^\dagger \rho^a  U \lvert \alpha_k \rangle \bigg)^2 = \sum_k s_k^2.
\end{align*}
Since $\rho$ is a pure state, $\rho^2 = \rho$ or $\rho= \sqrt{\rho} $. This completes the proof. 
 \end{proof}
  \section{H-MIN for mixed state}
A general bipartite state $\sqrt{\rho}$ on composite Hilbert space $\mathcal{H}^a \otimes \mathcal{H}^b$ can be written as 
\begin{equation}
  \sqrt{\rho} = \sum_{ij}\gamma_{ij} X_i \otimes Y_j\label{bipart}
\end{equation}
where $\Gamma = (\gamma_{ij})$ is a correlation matrix with real elements $\gamma_{ij} = \text{tr}(\sqrt{\rho}X_i\otimes Y_j)$. For any orthonormal basis $\{\lvert k\rangle : k = 1,2, \cdots, m\}, \lvert k\rangle \langle k\lvert = \sum_i a_{ki} X_i$ with $ a_{ki} = \text{tr}(\lvert k\rangle \langle k\rvert X_i)$\label{ak}. Defining a matrix $A = (a_{ij})$,~\text{we have}~$AA^t = \mathbb{I}_{m}$, where $A^t$ is the transpose of $A$.\\
\begin{theorem}
  For any bipartite state represented by eq.(\ref{bipart}), the H-MIN is bounded as
\begin{equation}
  N_H(\rho)  \leq \left(1- \sum_{i = 1}^{m-1} \mu_i\right)
\end{equation}
where $\{\mu_i, i = 1,2,\cdots, m^2\}$ are the eigenvalues of $\Gamma\Gamma^t$ listed in increasing order.
\end{theorem}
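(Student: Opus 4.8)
The plan is to work from the closed form eq.~(\ref{hmin}), $N_H(\rho)=1-\min_{\Pi^a}\mathrm{tr}[\sqrt\rho\,\Pi^a(\sqrt\rho)]$, so that it suffices to establish the $\Pi^a$-independent lower bound $\mathrm{tr}[\sqrt\rho\,\Pi^a(\sqrt\rho)]\ge\sum_{i=1}^{m-1}\mu_i$ for \emph{every} von Neumann projective measurement $\Pi^a=\{\lvert k\rangle\langle k\rvert\}$ on subsystem $a$ (in particular for the locally invariant ones in the definition).

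First I would substitute the expansion eq.~(\ref{bipart}) together with $\lvert k\rangle\langle k\rvert=\sum_i a_{ki}X_i$ into the overlap. Using $\langle k\rvert X_i\lvert k\rangle=a_{ki}$ one gets $\Pi^a_k X_i\Pi^a_k=a_{ki}\sum_l a_{kl}X_l$, and then the orthonormality relations $\mathrm{tr}(X_iX_j)=\delta_{ij}$, $\mathrm{tr}(Y_iY_j)=\delta_{ij}$ collapse the sums to
\[ \mathrm{tr}[\sqrt\rho\,\Pi^a(\sqrt\rho)]=\mathrm{tr}\!\left(A\,\Gamma\Gamma^t A^t\right)=\sum_{k=1}^{m}\mathbf{a}_k^{t}\,\Gamma\Gamma^t\,\mathbf{a}_k, \]
where $\mathbf{a}_1,\dots,\mathbf{a}_m$ are the rows of $A=(a_{ki})$; by $AA^t=\mathbb{I}_m$ these are orthonormal vectors in $\mathbb{R}^{m^2}$. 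I expect this algebraic reduction to be the most error-prone step.

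The structural point that produces the factor $m-1$ is that $X_1=\mathbb{I}/\sqrt m$ forces $a_{k1}=\mathrm{tr}(\Pi^a_k X_1)=1/\sqrt m$ for all $k$, while summing the defining relation over $k$ gives $\sum_k a_{ki}=\mathrm{tr}(X_i)=\sqrt m\,\delta_{1i}$; hence $\tfrac{1}{\sqrt m}\sum_k\mathbf{a}_k=e_1:=(1,0,\dots,0)^{t}$. Thus the $m$-dimensional span $V=\mathrm{span}\{\mathbf{a}_k\}$ always contains the fixed unit vector $e_1$, and since $\sum_k\mathbf{a}_k^{t}\Gamma\Gamma^t\mathbf{a}_k=\mathrm{tr}(\Gamma\Gamma^t P_V)$ ($P_V$ the orthogonal projector onto $V$) is basis-independent, I may evaluate it in an orthonormal basis $\{e_1,u_2,\dots,u_m\}$ of $V$:
\[ \mathrm{tr}[\sqrt\rho\,\Pi^a(\sqrt\rho)]=e_1^{t}\Gamma\Gamma^t e_1+\sum_{j=2}^{m}u_j^{t}\Gamma\Gamma^t u_j\ \ge\ \sum_{j=2}^{m}u_j^{t}\Gamma\Gamma^t u_j, \]
the inequality using positive semidefiniteness of $\Gamma\Gamma^t$. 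Finally $u_2,\dots,u_m$ are $m-1$ orthonormal vectors, so the Ky Fan minimum principle gives $\sum_{j=2}^{m}u_j^{t}\Gamma\Gamma^t u_j\ge\sum_{i=1}^{m-1}\mu_i$; this bound is uniform in $\Pi^a$, so it passes through the minimization and yields $N_H(\rho)\le 1-\sum_{i=1}^{m-1}\mu_i$.

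The one delicate point is this last inequality: the $u_j$ in fact lie in $e_1^{\perp}$, but discarding that constraint can only decrease $\sum_j u_j^{t}\Gamma\Gamma^t u_j$, so the unconstrained Ky Fan bound applies a fortiori (equivalently, keep the constraint and invoke Cauchy interlacing for the compression of $\Gamma\Gamma^t$ to $e_1^{\perp}$, whose eigenvalues dominate $\mu_1,\dots,\mu_{m^2-1}$). I would also remark that relaxing the minimization from locally invariant $\Pi^a$ to all projective $\Pi^a$ only lowers the minimum, which is the correct direction for an upper bound on $N_H$ and is exactly what makes the purely linear-algebraic estimate above sufficient.
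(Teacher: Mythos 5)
Your proposal is correct and follows essentially the same route as the paper: reduce the overlap to $\mathrm{tr}(A\Gamma\Gamma^t A^t)$ with $AA^t=\mathbb{I}_m$, exploit the fixed first column $a_{k1}=1/\sqrt{m}$ to isolate an effective set of $m-1$ orthonormal directions, and invoke the Ky Fan minimum principle. Your treatment is in fact a cleaner rendition of the paper's argument --- the paper extracts the $m-1$ directions by diagonalizing the reduced Gram matrix $\big(\sum_{i\ge 2}a_{ki}a_{k'i}\big)$ into $R$ with $RR^t=\mathbb{I}_{m-1}$, whereas you make explicit the discarded $e_1^t\Gamma\Gamma^t e_1\ge 0$ term and the relaxation to all projective measurements, both of which the paper leaves implicit.
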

\begin{proof}
Let us begin with 
\begin{align*}
\text{tr}\big(\sqrt{\rho}\,\Pi^a(\sqrt{\rho})\big) &=\text{tr}\sum_{k}\sqrt{\rho}~(\Pi^a_k\otimes \mathbb{I}) \sqrt{\rho}~(\Pi^a_k\otimes \mathbb{I}) \\
 &= \text{tr}\sum_{i'j'ijk}(\gamma_{i'j'}~X_{i'} \otimes Y_{j'}~\lvert  k\rangle \langle k \rvert \otimes \mathbb{I}~\gamma_{ij}~X_i\otimes Y_j \lvert \,k\rangle \langle k \rvert \otimes \mathbb{I}) \\
 &= \text{tr}\sum_{i'j'ijk} \gamma_{i'j'}\gamma_{ij} X_i \lvert k\rangle \langle k\rvert X_{i'}  \lvert k\rangle \langle k\rvert \otimes Y_jY_{j'}\\
 &= \sum_{i'j'ijk} \gamma_{i'j'}\gamma_{ij}\langle k \rvert X_i \lvert k\rangle \langle k \rvert X_{i'}  \lvert k\rangle.\delta_{jj'} \\
 &= \sum_{i'ijk}a_{ki}\gamma_{ij}\gamma_{i'j}a_{ki'} \\
 &= \text{tr}(A\Gamma\Gamma^tA^t)
\end{align*}
To compute H-MIN we need to minimize the above quantity. The minimaization is required only when the reduced system $ \rho^a$ is degenerate. In the case of degenerate case, we adopt the following optimization procedure. Since $ \sum^{m^2}_{i =1}a_{ki}a_{k'i} = \delta_{kk'} $ and $a_{k1} = \text{tr}\big(\lvert k \rangle \langle k \rvert X_1 \big)= \frac{1}{\sqrt{m}}$, we have
\begin{equation}
  \sum^{m^{2}}_{i=2} a_{ki}a_{k'i}  =
  \begin{cases}
    \frac{m-1}{m} & \text{if} ~ k = k',\\
    -\frac{1}{m} & \text{if} ~ k \neq k'.
  \end{cases} \label{delta}
\end{equation}
 From this the real matrix $AA^t$ with eigenvalues 0 and 1 can be expressed as 
\begin{equation}
  AA^t = \frac{1}{m} 
  \begin{pmatrix}
    m-1  & -1   & ... &  -1\\
    -1   & m-1  & ... &  -1\\
    .    &  .   &  .  &   .\\
    .    &  .   &  .  &   .\\
    .    &  .   &  .  &   .\\
    -1   & -1   & ... &  m-1\label{a_matrix}
  \end{pmatrix}.
\end{equation}
This matrix can be diagonalised as $AA^t = UDU^t$, with $U$  being a real unitary matrix and
\begin{equation}
  D = 
  \begin{pmatrix}
    \mathbb{I}_{m-1} & 0\\
    0                & 0
  \end{pmatrix}.
\end{equation}
 Defining $B := U^tA = 
\begin{pmatrix}
  R\\
  0
\end{pmatrix}$
\begin{equation}
  BB^t = 
  \begin{pmatrix}
    \mathbb{I}_{m-1} & 0\\
    0                & 0
  \end{pmatrix}\label{b_matrx}.
\end{equation}
From the above equation, $RR^t = \mathbb{I}_{m-1}$ and the H-MIN can be written as
\begin{align*}
  N_H(\rho) &= (1-~^{\text{min}}_A \text{tr}\,A\Gamma\Gamma^tA^t) \\
            &= (1-~^{\text{min}}_R \text{tr}\,R\Gamma\Gamma^tR^t) \\
            &\leq \left(1- \sum^{m-1}_{i=1} \mu_i\right).
\end{align*}
Hence the theorem is proved.
\end{proof}
\begin{theorem}
  For any $2 \times n$ dimensional bipartite system $\rho$, H-MIN is given as
  \begin{equation*}
  N_H(\rho) =
  \begin{cases}
    (1-\mu_1) & \text{if} ~\textbf{x} = 0,\\
    (1-\hbox{tr}(\,A\Gamma\Gamma^tA^t)) & \text{if} ~\textbf{x} \neq 0    
  \end{cases}
  \end{equation*}
  \text{with}
 \begin{equation}
  A = \frac{1}{\sqrt{2}}
  \begin{pmatrix}
    1 & \frac{\textbf{x}}{\lVert \textbf{x}\lVert } \\
    1 & -\frac{\textbf{x}}{\lVert \textbf{x}\lVert }
 \end{pmatrix}\label{A_matrix}.
  \end{equation}
\end{theorem}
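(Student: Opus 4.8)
The plan is to compute $N_H(\rho)=1-\min_{\Pi^a}\text{tr}[\sqrt{\rho}\,\Pi^a(\sqrt{\rho})]$ from eq.~(\ref{hmin}) using the identity $\text{tr}(\sqrt{\rho}\,\Pi^a(\sqrt{\rho}))=\text{tr}(A\Gamma\Gamma^t A^t)$ derived in the proof of the previous theorem, and to split the argument according to whether the reduced state $\rho^a$ is degenerate; these two possibilities are exactly $\textbf{x}=0$ (when $\rho^a=\mathbb{I}/2$) and $\textbf{x}\neq 0$.

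First I would parametrize the admissible measurements. Since $a$ is a qubit, every rank-one von Neumann measurement has the form $\Pi^a=\{\tfrac12(\mathbb{I}+\hat{\textbf{n}}\cdot\vec\sigma),\ \tfrac12(\mathbb{I}-\hat{\textbf{n}}\cdot\vec\sigma)\}$ for a unit vector $\hat{\textbf{n}}\in\mathbb{R}^3$; choosing the operator basis $X_1=\mathbb{I}/\sqrt{2}$, $X_{i+1}=\sigma_i/\sqrt{2}$, the coefficient matrix $A=(a_{ki})$ with $a_{ki}=\text{tr}(|k\rangle\langle k|X_i)$ is precisely $A=\frac{1}{\sqrt 2}\begin{pmatrix}1 & \hat{\textbf{n}}^t\\ 1 & -\hat{\textbf{n}}^t\end{pmatrix}$. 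The local-invariance constraint $\Pi^a(\rho^a)=\rho^a$ holds iff the basis $\{|k\rangle\}$ diagonalizes $\rho^a$.

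If $\textbf{x}\neq 0$ the reduced state $\rho^a$ is non-degenerate, so its eigenbasis is unique up to phases and relabelling; hence $\hat{\textbf{n}}$ is forced to $\pm\textbf{x}/\lVert\textbf{x}\rVert$, and the two sign choices merely interchange the rows of $A$, leaving $\text{tr}(A\Gamma\Gamma^t A^t)$ unchanged. No optimization survives, $A$ is the matrix of eq.~(\ref{A_matrix}), and $N_H(\rho)=1-\text{tr}(A\Gamma\Gamma^t A^t)$. If $\textbf{x}=0$ then $\rho^a=\mathbb{I}/2$, every $\hat{\textbf{n}}$ is admissible, and a genuine minimization remains: substituting the parametrized $A$ into $\text{tr}(A\Gamma\Gamma^t A^t)$ and using $\sum_k|k\rangle\langle k|=\mathbb{I}^a$ to kill the cross terms between the $X_1$-row of $\Gamma$ and the remaining rows, the expression reduces to $c+\hat{\textbf{n}}^t M\hat{\textbf{n}}$ with $c$ independent of the measurement and $M$ the pertinent $3\times 3$ positive-semidefinite block of $\Gamma\Gamma^t$. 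Minimizing the quadratic form over the unit sphere gives $\lambda_{\min}(M)$, attained at its bottom eigenvector --- the configuration that saturates the bound of the previous theorem (for $m=2$ the matrix $R$ there is a single unit row vector, so $1-\mu_1$ is actually achieved), whence $N_H(\rho)=1-\mu_1$.

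The only real obstacle I anticipate is the bookkeeping in the degenerate branch: one must track carefully how the $4\times n^2$ correlation matrix $\Gamma$ decomposes under the measurement so that the problem collapses to a $3\times 3$ eigenvalue problem, and then verify that the minimum obtained is exactly $\mu_1$ as it enters the previous theorem, i.e.\ that the upper bound there is tight for $2\times n$ systems. The non-degenerate branch is essentially immediate once one notes that uniqueness of the eigenbasis of a non-degenerate qubit state removes every degree of freedom in the choice of $\Pi^a$.
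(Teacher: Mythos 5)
Your proposal follows essentially the same route as the paper: parametrize the qubit measurement by a Bloch unit vector $\hat{\textbf{n}}$, note that local invariance forces $\hat{\textbf{n}}=\pm\textbf{x}/\lVert\textbf{x}\rVert$ when $\textbf{x}\neq 0$ (which is exactly how the paper obtains the stated $A$ from $a_{ij}=\text{tr}(\Pi_i X_j)$), and minimize the quadratic form over the unit sphere when $\textbf{x}=0$. You actually give more detail than the paper in the degenerate branch, where the paper merely asserts $\text{tr}(R\Gamma\Gamma^t R^t)=\mu_1$; the one obstacle you flag is real, since your (correct) computation yields $\sum_j\gamma_{1j}^2+\lambda_{\min}(M)$ with $M$ the lower $3\times 3$ block of $\Gamma\Gamma^t$, and identifying this with the $\mu_1$ of Theorem~2 (the smallest eigenvalue of the full $4\times 4$ matrix $\Gamma\Gamma^t$) requires a charitable re-reading of the paper's notation rather than a missing step on your part.
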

\begin{proof}
Let the reduced density matrix of subsystem be
\begin{equation}
\rho^a = \frac{1}{2}\mathbb{I}^a + \sum^{3}_{i=1}x_iX_i
\end{equation}
with $X_i \, \text{in terms of Pauli matrices as} \, X_i = \sigma_i/\sqrt{2},(i = 1,2,3)~\text{and}~\textbf{x} = (x_1,x_2,x_2)$.
For $\textbf{x} = 0$, the state $\rho^a = \mathbb{I}/2$. Then $\text{tr}(R\Gamma\Gamma^tR^t)$ = $\mu_1$. When $\textbf{x} \neq 0$, the von Neumann measurements which leave $\rho^a$ invariant are
\begin{equation}
  \Pi_1 = \frac{1}{2}\bigg(\mathbb{I}+ \frac{1}{\|\textbf{x}\|}\sum^{3}_{i=1}x_i\sigma_i\bigg),
\end{equation}\label{sp_pro1}
\begin{equation}
  \Pi_2 = \frac{1}{2}\bigg(\mathbb{I}- \frac{1}{\|\textbf{x}\|}\sum^{3}_{i=1}x_i\sigma_i\bigg).
\end{equation}\label{sp1}
From the above equations, $a_{ij} = \text{tr}(\Pi_iX_j)$ such that $a_{1i} = - a_{2i} = \frac{x_i}{\sqrt{2}\lVert \textbf{x}\rVert }$. With this we get
\begin{equation*}
  A = \frac{1}{\sqrt{2}}
  \begin{pmatrix}
    1 & \frac{\textbf{x}}{\lVert \textbf{x}\lVert } \\
    1 & -\frac{\textbf{x}}{\lVert \textbf{x}\lVert }
 \end{pmatrix}\label{A_matrix}.
\end{equation*}
This completes the proof.
\end{proof}
\section{Examples}
In this section we calculate H-MIN for some well known family of quantum states namely Bell diagonal state, Werner state and Isotropic state. 
\subsection{Bell diagonal state}
The Bloch vector representation of Bell diagaonal state is given as 
\begin{equation}
  \rho^{BD} = \frac{1}{4}\Big(\mathbb{I}\otimes \mathbb{I} + \sum^{3}_{i = 1} c_i(\sigma_i \otimes \sigma_i)\Big) \label{bell_diag}
\end{equation}
where $\textbf{c} = (c_1, c_2, c_3) $ are the correlation coefficients with $ -1 \leq c_i \leq 1$. Then
\begin{equation}
  \sqrt{\rho^{BD}} =  \frac{1}{4}\Big(\delta~\mathbb{I}\otimes \mathbb{I} + \sum^{3}_{i = 1} d_i(\sigma_i \otimes \sigma_i)\Big)
\end{equation}
where $\delta = \text{tr}(\sqrt{\rho^{BD}}) = \sum_i \sqrt{\lambda_i}$ and
\begin{equation*}
  d_1 = \sqrt{\lambda_1}- \sqrt{\lambda_2} + \sqrt{\lambda_3} - \sqrt{\lambda_4} 
\end{equation*}
\begin{equation*}
d_2 =  -\sqrt{\lambda_1}+ \sqrt{\lambda_2} + \sqrt{\lambda_3} - \sqrt{\lambda_4} 
\end{equation*}
\begin{equation*}
d_3 = \sqrt{\lambda_1} + \sqrt{\lambda_2} - \sqrt{\lambda_3} - \sqrt{\lambda_4}. 
\end{equation*}
Here $\lambda_i$ are the eigenvalues of the Bell diagonal state. Then H-MIN is computed as
\begin{equation}
  N_H(\rho^{BD}) = \left(1 - \frac{1}{4}(\delta^2 +\text{min}\{d^2_i\})\right).
\end{equation}
\begin{figure}[!ht]
\centering\includegraphics[width=0.7\linewidth]{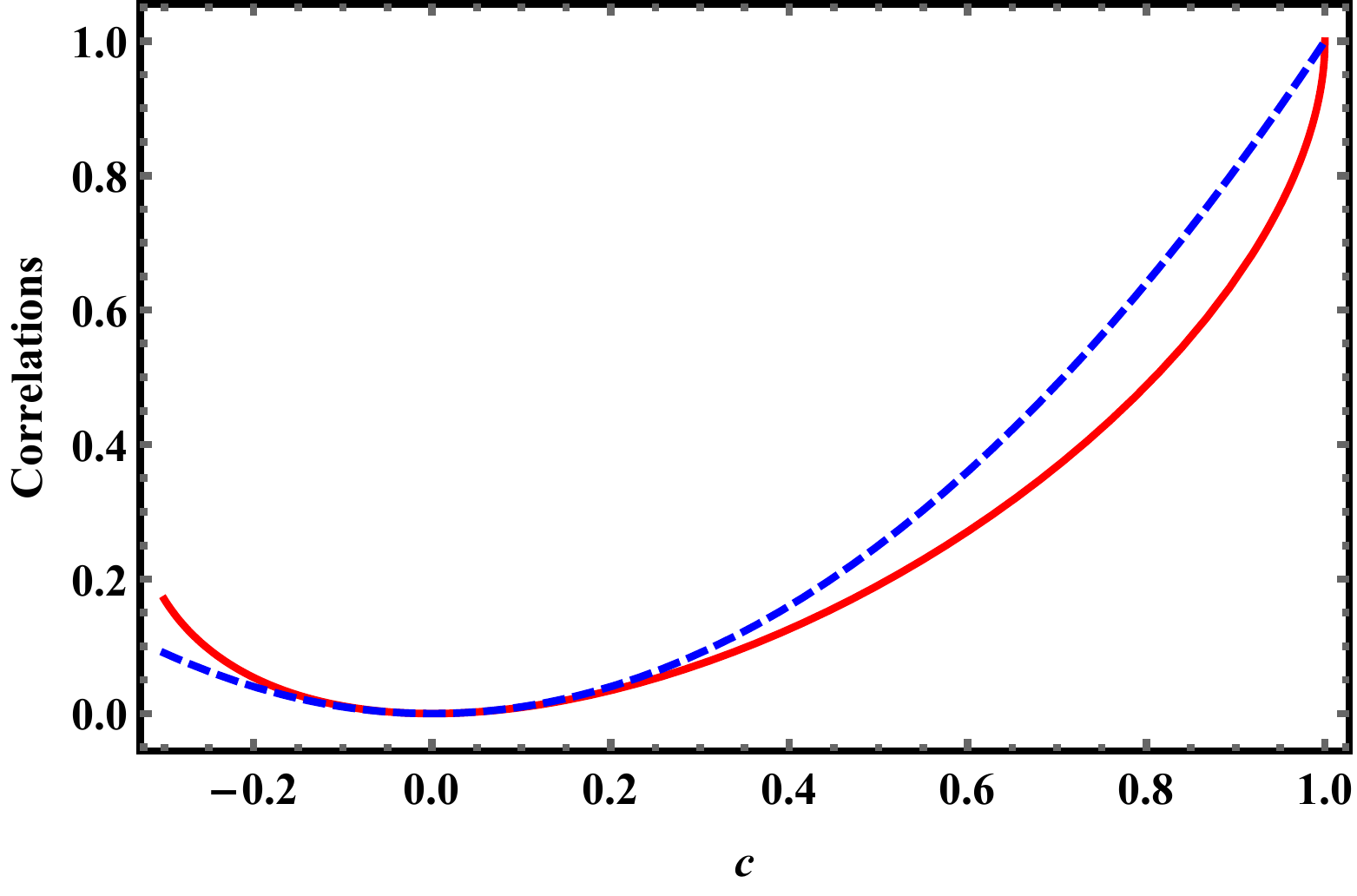}
\caption{(color online) H-MIN (solid) and MIN (dashed) for Bell diagonal state.}
\label{bell}
\end{figure}
\subsection{Isotropic state}
An $n\times n$ dimensional isotropic state is defined as \cite{isotropic} 
\begin{equation}
  \rho^{iso} = \frac{1-x}{n^2-1} \mathbb{I} +\frac{n^2~x - 1}{n^2 - 1}\lvert \phi \rangle \langle \phi \rvert
\end{equation}
where $ \lvert \phi \rangle =\frac{1}{\sqrt{n}} \sum_i\lvert ii\rangle $ with $ x \in [0, 1]$ for which H-MIN is calculated as 
\begin{equation}
  N_H(\rho^{iso}) = \frac{1}{n}\bigg(\sqrt{(n-1)x}-\sqrt{\frac{1-x}{n+1}}\bigg)^2.
\end{equation}
\begin{figure}[!ht]
\centering\includegraphics[width=0.7\linewidth]{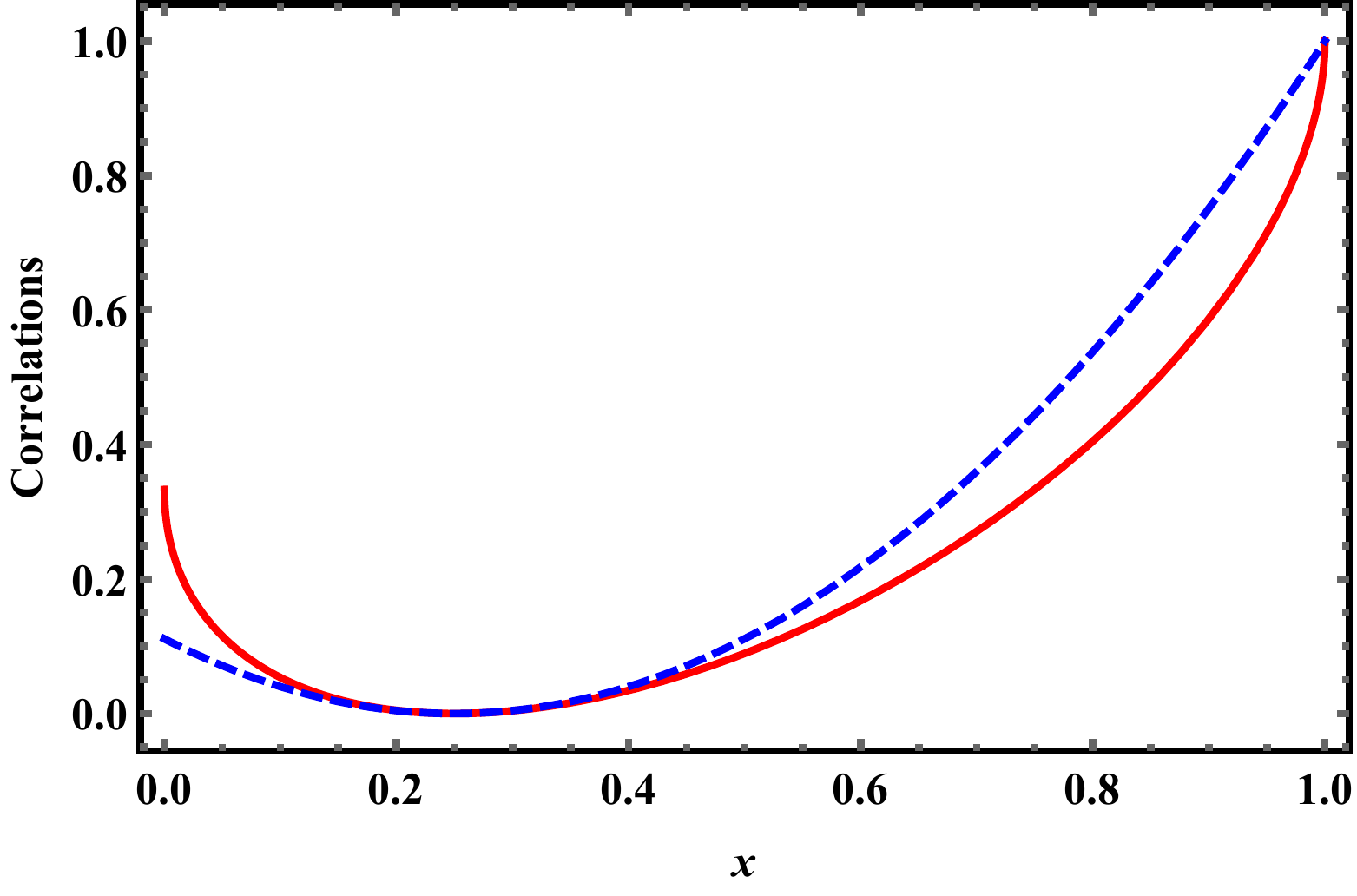}
\caption{(color online) H-MIN (solid) and MIN (dashed) for isotropic state with $n = 2$.}
\label{iso}
\end{figure}
\subsection{Werner state}
Werner state with $d\times d$ dimension can be represented as \cite{werner}
\begin{equation}
  \rho^{w} = \frac{d-x}{d^3-d}\mathbb{I} + \frac{xd-1}{d^3-d}\sum_{\alpha \beta}\lvert \alpha \rangle \langle\beta \rvert  \otimes\lvert \beta \rangle \langle \alpha  \rvert
\end{equation}
where $\sum_{\alpha \beta}\lvert \alpha \rangle \langle\beta \rvert  \otimes\lvert \beta \rangle \langle \alpha  \rvert$ is flip operator with $x \in [-1,1]$. In this case we have 
\begin{equation}
  N_H(\rho^w) = \frac{1}{2}\bigg(\frac{d-x}{d+1}-\sqrt{\frac{d-1}{d+1}(1-x^2)}\bigg).
\end{equation}
\begin{figure}[!ht]
\centering\includegraphics[width=0.7\linewidth]{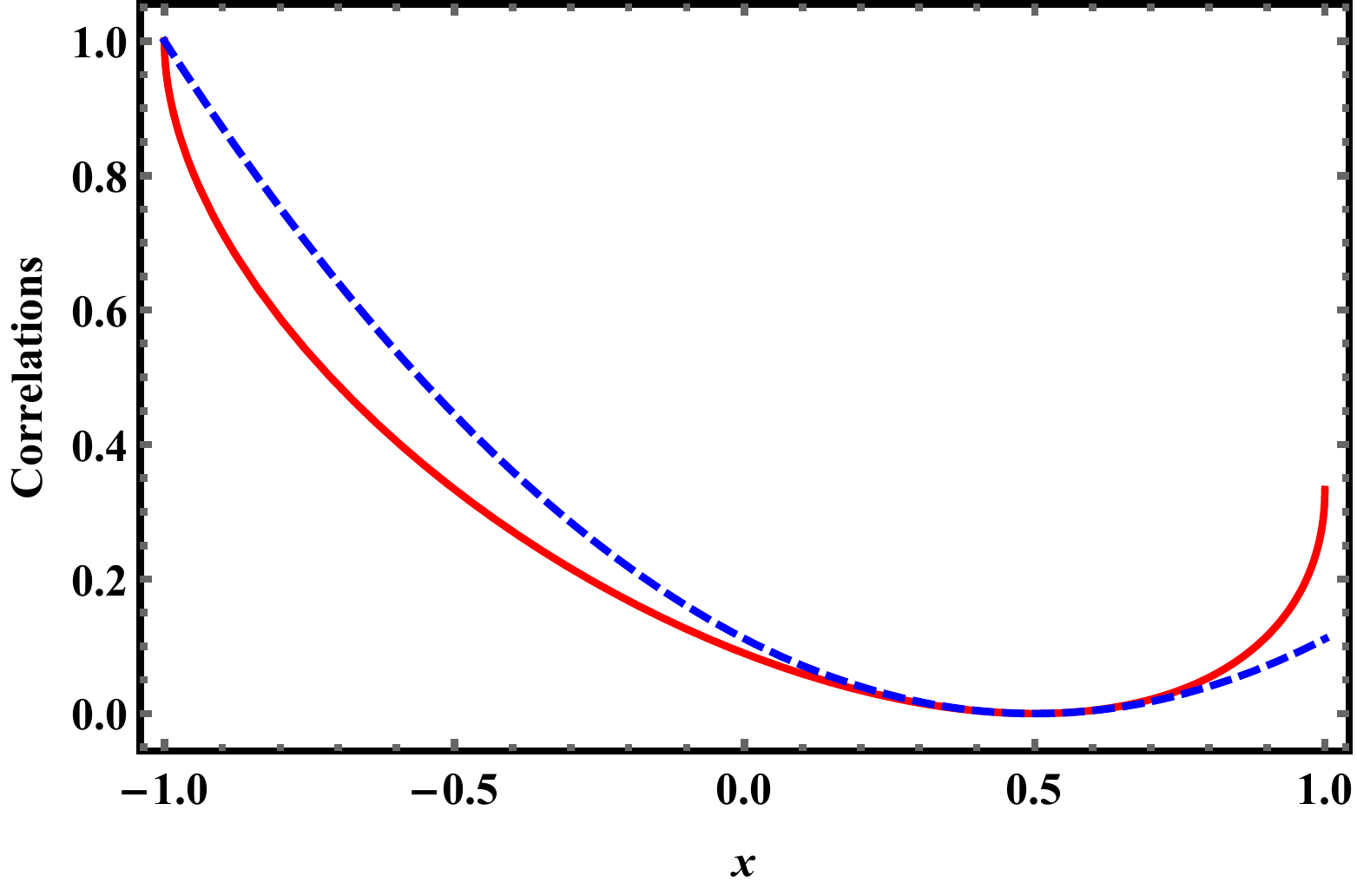}
\caption{(color online) H-MIN (solid) and MIN (dashed) for Werner state with $d = 2$.}
\label{wern}
\end{figure}
While plotting the above results, we multiply MIN and H-MIN by a factor 2, so that they vary from 0 to 1. In Fig.\ref{bell}, we plot the H-MIN for Bell diagonal state with $ c_1, c_2, c_3 = -c$ and the result is compared with MIN. It is clear that both H-MIN and MIN are qualitatively similar. We observe that both the quantities coincide for maximally entangled state ($ c = 1$) and diagonal state $ (c = 0)$. From Fig. \ref{iso} and Fig. \ref{wern} also we observe that H-MIN and MIN are quite consistent for isotropic and Werner states respectively.
\section{H-MIN based on weak measurements} 
In this section we define H-MIN based on weak measurements. A quantum measurement with any number of outcomes can be contructed as a sequence of two outcomes. Such weak measurement operators \cite{weak2} are defined as 
\begin{equation}
  \Omega_x = \tau_1 \Pi^1 + \tau_2 \Pi^2, ~~\Omega_{-x} = \tau_2 \Pi^1 + \tau_1 \Pi^2
\end{equation}\label{weak_op}
where $\tau_1 = \sqrt{\frac{1-\tanh{x}}{2}},~ \tau_2 = \sqrt{\frac{1+\tanh{x}}{2}}$ with $x \in \mathbb{R}$ being the strength of weak measurement and $\Pi^1,~\Pi^2$ are the two orthogonal projectors such that $\Pi^1 + \Pi^2 = \mathbb{I}$. Here $\Pi^1$ and $ \Pi^2$ can be decompossed as $ \Pi^1 = \sum^k_{i = 1} \Pi_i,\, \Pi^2 = \sum^n_{i = k+1} \Pi_i$, where $\{\Pi_i\}$ are von Neumann measurements \cite{weak_geo}. Weak measurements obey the relation $\sum_{k = \pm x} \Omega_k \Omega_k^{\dagger} = \mathbb{I}$. When $x\rightarrow \infty$ weak operators reduces to orthogonal projective measurements.
We define H-MIN based on weak measurement (WH-MIN) as 
\begin{equation}
  N_{w}(\rho) :=~^{\text{max}}_{\Omega} \lVert \sqrt{\rho}-\Omega(\sqrt{\rho}) \rVert^2
\end{equation}
where $\Omega(\sqrt{\rho}) = \sum_{k = \pm x} (\Omega_k \otimes \mathbb{I})~\sqrt{\rho}~(\Omega_k \otimes \mathbb{I})$.
\begin{lemma}
  For any bipartite state $\rho$, WH-MIN  
  \begin{equation}
    N_w(\rho) = (1-\tau)^2 N_H(\rho)
  \end{equation}
  where $ \tau = 2\tau_1\tau_2 = \sech\,x$.
\end{lemma}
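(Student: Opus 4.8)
The plan is to reduce the lemma to a single operator identity relating the weakly--measured state $\Omega(\sqrt{\rho})$ to the projectively--measured state $\Pi^{a}(\sqrt{\rho})$, after which the factor $(1-\tau)^{2}$ simply pulls out of the Hilbert--Schmidt norm. I would fix a locally invariant von Neumann measurement $\{\Pi_i\}$ on subsystem $a$ together with the two--outcome coarse graining $\Pi^{1},\Pi^{2}$ (with $\Pi^{1}+\Pi^{2}=\mathbb{I}$) from which $\Omega_{\pm x}$ is built, write $S:=\sqrt{\rho}$, and split $S$ into the four blocks $P_{ij}:=(\Pi^{i}\otimes\mathbb{I})\,S\,(\Pi^{j}\otimes\mathbb{I})$ for $i,j\in\{1,2\}$. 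Then $S=P_{11}+P_{12}+P_{21}+P_{22}$ and $\Pi^{a}(S)=P_{11}+P_{22}$, so that $S-\Pi^{a}(S)=P_{12}+P_{21}$ is exactly the off--diagonal part.

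The key computation is to expand $\Omega(S)=\sum_{k=\pm x}(\Omega_k\otimes\mathbb{I})\,S\,(\Omega_k\otimes\mathbb{I})$ using $\Omega_x=\tau_1\Pi^{1}+\tau_2\Pi^{2}$ and $\Omega_{-x}=\tau_2\Pi^{1}+\tau_1\Pi^{2}$. Collecting terms, the diagonal blocks $P_{11},P_{22}$ pick up the coefficient $\tau_1^{2}+\tau_2^{2}=1$ and the off--diagonal blocks $P_{12},P_{21}$ pick up $2\tau_1\tau_2=\sech x=\tau$, giving
\begin{equation*}
\Omega(S)=(P_{11}+P_{22})+\tau\,(P_{12}+P_{21})=\Pi^{a}(S)+\tau\bigl(S-\Pi^{a}(S)\bigr),
\end{equation*}
hence $S-\Omega(S)=(1-\tau)\bigl(S-\Pi^{a}(S)\bigr)$. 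Taking the squared Hilbert--Schmidt norm yields $\lVert\sqrt{\rho}-\Omega(\sqrt{\rho})\rVert^{2}=(1-\tau)^{2}\lVert\sqrt{\rho}-\Pi^{a}(\sqrt{\rho})\rVert^{2}$, and since $\tau=\sech x$ depends only on the measurement strength and not on the choice of projectors, maximising both sides over locally invariant measurements gives $N_w(\rho)=(1-\tau)^{2}N_H(\rho)$. As a sanity check, $x\to\infty$ sends $\tau\to0$ and recovers $N_w\to N_H$, consistent with the weak operators collapsing to projectors.

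I expect the main obstacle to be the step $P_{11}+P_{22}=\Pi^{a}(\sqrt{\rho})$: when $\Pi^{1}$ and $\Pi^{2}$ are genuine coarse grainings of several rank--one projectors, $(\Pi^{1}\otimes\mathbb{I})S(\Pi^{1}\otimes\mathbb{I})$ still carries intra--block off--diagonal contributions, so this equality with the rank--one post--measurement state $\sum_i(\Pi_i\otimes\mathbb{I})S(\Pi_i\otimes\mathbb{I})$ is immediate only when subsystem $a$ is two--dimensional (the case $\Pi^{1}=\Pi_1$, $\Pi^{2}=\Pi_2$, which covers every worked example of the preceding section). For higher--dimensional $a$ I would recover the general $n$--outcome statement by viewing the weak measurement as a sequence of two--outcome weak measurements and applying the identity above at each split. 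A minor but easy--to--slip bookkeeping point is that both $(\Omega_x\otimes\mathbb{I})S(\Omega_x\otimes\mathbb{I})$ and $(\Omega_{-x}\otimes\mathbb{I})S(\Omega_{-x}\otimes\mathbb{I})$ each contribute $\tau_1\tau_2(P_{12}+P_{21})$, so the two together produce the coefficient $\tau=2\tau_1\tau_2$ rather than $\tau_1\tau_2$ alone; similarly one should confirm $\tau_1^{2}+\tau_2^{2}=1$ and $2\tau_1\tau_2=\sech x$ directly from the explicit forms of $\tau_1,\tau_2$.
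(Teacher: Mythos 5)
Your proof is correct and follows essentially the same route as the paper's: both rest on establishing the operator identity $\Omega(\sqrt{\rho})=\tau\sqrt{\rho}+(1-\tau)\,\Pi^{a}(\sqrt{\rho})$ via the coefficients $\tau_1^{2}+\tau_2^{2}=1$ and $2\tau_1\tau_2=\sech x$, after which the paper expands the trace term by term while you more cleanly factor $(1-\tau)$ out of $\sqrt{\rho}-\Omega(\sqrt{\rho})$ before taking the squared norm. The caveat you flag --- that $\sum_{i=1}^{2}(\Pi^{i}\otimes\mathbb{I})\sqrt{\rho}\,(\Pi^{i}\otimes\mathbb{I})$ equals the rank-one post-measurement state $\Pi^{a}(\sqrt{\rho})$ only when the two coarse projectors are themselves rank one --- is a genuine one that the paper makes silently, so your attention to it is warranted rather than a defect of your argument.
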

\begin{proof}
\begin{align}
  \Omega(\rho) &= \sum_{k = \pm x}(\Omega_k \otimes \mathbb{I} )\rho (\Omega_k \otimes \mathbb{I})\nonumber \\ 
               &= 2\tau_1\tau_2 \rho + \sum^2_{i=1}(1- 2\tau_1\tau_2)(\Pi^i \otimes \mathbb{I}\,\rho\,\Pi^i \otimes  \mathbb{I} ) \nonumber\\ 
               &= 2\tau_1\tau_2 \rho + (1- 2\tau_1\tau_2 )\,\Pi^a(\rho) \nonumber \\ 
               &= \tau\,\rho + (1-\tau)\,\Pi^a(\rho)\label{weak_op}.
\end{align}
 With this, WH-MIN is computed as
\begin{align*}
  N_w(\rho)  &= \,^{\text{max}}_{\Omega} \text{tr}\big[\rho +  \Omega^2(\sqrt{\rho}) -2 \sqrt{\rho} \Omega(\sqrt{\rho})\big]\\
      &=  \,^{\text{max}}_{\Pi^a}\text{tr}\big[\rho + t^2 \rho + (1-t^2)\sqrt{\rho}\Pi(\sqrt{\rho}) - 2\big(t\rho + (1-\tau)\sqrt{\rho}\Pi^a(\sqrt{\rho})\big)\big]\\
      &= (\tau^2 - 2\tau +1) ^{\text{max}}_{\Pi^a}\,\text{tr}\big[\rho - \sqrt{\rho}\,\Pi^a(\sqrt{\rho})\big]\\
      &= (1-\tau)^2 \left(1-\, ^{\text{min}}_{\Pi^a}\,\text{tr}\big[\sqrt{\rho}\,\Pi^a(\sqrt{\rho})\big]\right) \\
      &= (1-\tau)^2 N_H(\rho).
  \end{align*}
  which completes the proof.
\end{proof}
In the asymptotic limit $x\rightarrow \infty$, $\Omega(\sqrt{\rho}) = \Pi^a(\sqrt{\rho})$ implying that $N_w(\rho) = N_H(\rho)$. In this sense $N_w(\rho)$ can be considered as the generalized form of $N_H(\rho)$. Defining $\rho_1 = \Omega(\sqrt{\rho})$, $\rho_2 = \Omega(\Omega(\sqrt{\rho})) = \Omega_2 (\sqrt{\rho})$ and so on, the sequential weak measurements on the state $\rho$ is denoted as
\begin{align}\nonumber
\rho_n &= \Omega_n(\sqrt{\rho}) \\ 
               &= \Omega_{n-1}\big(\Omega(\sqrt{\rho})\big) \nonumber\\
               &= \tau^n\sqrt{\rho} + \big(1- \tau^n\big)\Pi^a(\sqrt{\rho}) \label{rho_n}.
\end{align}
\begin{lemma}
  If $H^m_n(\rho) = \lVert \rho_{m}-\rho_n\rVert^2,$ then $ H^m_n = (\tau^m - \tau^n)^2 \,\mathcal{I}(\sqrt{\rho},\Pi^a\otimes \mathbb{I})$, where $ \mathcal{I}(\sigma,K)$ is the Wigner-Yanase skew information.
\end{lemma}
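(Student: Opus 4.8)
The plan is to use the affine structure of the sequentially weak-measured states in eq.~(\ref{rho_n}). Since $\rho_n = \tau^n\sqrt{\rho} + (1-\tau^n)\,\Pi^a(\sqrt{\rho})$, subtracting the $n$th from the $m$th iterate makes the two $\Pi^a(\sqrt{\rho})$ pieces combine with coefficient $(1-\tau^m)-(1-\tau^n) = -(\tau^m-\tau^n)$, so that
\[
  \rho_m - \rho_n = (\tau^m-\tau^n)\big(\sqrt{\rho}-\Pi^a(\sqrt{\rho})\big).
\]
Both $\sqrt{\rho}$ and $\Pi^a(\sqrt{\rho})$ are self-adjoint, hence $\rho_m-\rho_n$ is self-adjoint and the scalar factors out of the Hilbert--Schmidt norm squared:
\[
  H^m_n(\rho) = \lVert \rho_m-\rho_n\rVert^2 = (\tau^m-\tau^n)^2\,\text{tr}\big(\sqrt{\rho}-\Pi^a(\sqrt{\rho})\big)^2 .
\]

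The next step is to expand the remaining trace, $\text{tr}\big(\sqrt{\rho}-\Pi^a(\sqrt{\rho})\big)^2 = \text{tr}(\rho) - 2\,\text{tr}[\sqrt{\rho}\,\Pi^a(\sqrt{\rho})] + \text{tr}\big[(\Pi^a(\sqrt{\rho}))^2\big]$, and to collapse the last term using the idempotence and orthogonality $(\Pi^a_k\otimes\mathbb{I})(\Pi^a_l\otimes\mathbb{I}) = \delta_{kl}(\Pi^a_k\otimes\mathbb{I})$ together with cyclicity of the trace, which gives $\text{tr}\big[(\Pi^a(\sqrt{\rho}))^2\big] = \text{tr}[\sqrt{\rho}\,\Pi^a(\sqrt{\rho})]$. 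With $\text{tr}(\rho)=1$ this yields
\[
  \text{tr}\big(\sqrt{\rho}-\Pi^a(\sqrt{\rho})\big)^2 = 1 - \text{tr}[\sqrt{\rho}\,\Pi^a(\sqrt{\rho})],
\]
the same expression that appears inside eq.~(\ref{hmin}).

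Finally I would identify this quantity with the Wigner--Yanase skew information associated with the local measurement $\Pi^a$. Writing $P_k = \Pi^a_k\otimes\mathbb{I}$ and expanding $[\sqrt{\rho},P_k]^2$ using $P_k^2=P_k$ and cyclicity gives $-\tfrac12\,\text{tr}[\sqrt{\rho},P_k]^2 = \text{tr}(\rho P_k) - \text{tr}(\sqrt{\rho}P_k\sqrt{\rho}P_k)$; summing over $k$ with $\sum_k P_k = \mathbb{I}$ and $\sum_k P_k\sqrt{\rho}P_k = \Pi^a(\sqrt{\rho})$ produces $\sum_k \mathcal{I}(\rho,\Pi^a_k\otimes\mathbb{I}) = 1 - \text{tr}[\sqrt{\rho}\,\Pi^a(\sqrt{\rho})] =: \mathcal{I}(\sqrt{\rho},\Pi^a\otimes\mathbb{I})$. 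Combining the three displays gives $H^m_n = (\tau^m-\tau^n)^2\,\mathcal{I}(\sqrt{\rho},\Pi^a\otimes\mathbb{I})$. The only genuinely delicate point is bookkeeping: one must fix the convention that $\mathcal{I}(\sqrt{\rho},\Pi^a\otimes\mathbb{I})$ denotes the sum over measurement outcomes of the skew informations of $\rho$ relative to the local projectors, and verify the operator identity $\text{tr}[(\Pi^a(\sqrt{\rho}))^2] = \text{tr}[\sqrt{\rho}\,\Pi^a(\sqrt{\rho})]$ carefully; everything else follows immediately from eq.~(\ref{rho_n}).
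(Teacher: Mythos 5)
Your proof is correct and follows essentially the same route as the paper: both start from eq.~(\ref{rho_n}), reduce everything to $\text{tr}[\rho-\sqrt{\rho}\,\Pi^a(\sqrt{\rho})]$ via the identity $\text{tr}[(\Pi^a(\sqrt{\rho}))^2]=\text{tr}[\sqrt{\rho}\,\Pi^a(\sqrt{\rho})]$, and identify that quantity with the skew information. Your version is marginally cleaner in that you factor $\rho_m-\rho_n=(\tau^m-\tau^n)\big(\sqrt{\rho}-\Pi^a(\sqrt{\rho})\big)$ before squaring instead of expanding $\text{tr}[\rho_m^2+\rho_n^2-2\rho_m\rho_n]$ term by term, but this is the same argument.
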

\begin{proof}
 We have
\begin{align*}
    H^m_n(\rho) &= \| \rho_{m} - \rho_n \|^2 \\
                &= \text{tr}[\rho^2_m + \rho^2_n - 2 A_{mn}]
\end{align*}  
where $A_{mn} = \rho_{m}\,\rho_n$ which is computed as
\begin{align*}
    A_{mn} &=[\tau^{m}\sqrt{\rho} + (1-\tau^{m})\Pi^a(\sqrt{\rho}) ][\tau^n\sqrt{\rho}+(1-\tau^n)\Pi^a(\sqrt{\rho})]\\
    &= \tau^{m+n}\rho +(1-\tau^{m+n})\sqrt{\rho}\,\Pi^a(\sqrt{\rho})
  \end{align*}
we make use of the identity $\text{tr}[\left(\Pi^a(\sqrt{\rho})\right)^2] = \text{tr}[\sqrt{\rho}\,\Pi^a(\sqrt{\rho})] $.
\begin{align*}              
       H^m_n(\rho)          &= \text{tr}[\rho(\tau^{2m} + \tau^{2n} - 2\tau^{m+n}) - \sqrt{\rho}\,\Pi^a(\sqrt{\rho})(\tau^{2m} + \tau^{2n} - 2\tau^{m+n})]\\
                &= (\tau^m - \tau^n)^2 \text{tr}[\rho  - \sqrt{\rho}\,\Pi^a(\sqrt{\rho})]\\
                &= (\tau^m - \tau^n)^2 \,\mathcal{I}(\sqrt{\rho},\Pi^a\otimes \mathbb{I}).
  \end{align*}  
 Hence the lemma is proved. 
 \end{proof} 
  As a special case with $ m =0$,
  \begin{equation}
  H^0_n(\rho) = \lVert \sqrt{\rho} -\rho_n \rVert^2 
\end{equation}\label{n_weak}
is the Hellinger distance between the states $ \rho$ and $ \rho_n$. Since $\cosh x = \sum^{\infty}_{k=0}\frac{x^{2k}}{(2k)!}$, $ \lim_{n \rightarrow \infty} \,(\cosh x)^n = \infty$, implying that $\tau^n = 0$ in the same limit. In other words, 
\begin{equation}
  \lim_{n\rightarrow \infty}\,\rho_n = \Pi^a(\sqrt{\rho})
\end{equation}
implying that infinte application of sequential weak measurement on the state $ \rho$ is equivalent to the post-measured state, and hence we have the relation
\begin{eqnarray}
 N_H(\rho) = \, ^{\text{max}}_{\Pi^a}\,\lim_{n\rightarrow \infty} H^0_n(\rho). 
\end{eqnarray}
In what follows we look at $ H^0_n(\rho)$ for a maximally entangled pure state $ \lvert \psi \rangle = \frac{1}{\sqrt{2}}\left(\lvert 00 \rangle + \lvert 11 \rangle\right)$. In Fig. \ref{weakpic}(i) we plot $ H^0_n$ as a function as $x$, the strength of weak measurement, for fixed $n$. Fig. \ref{weakpic}(ii) shows $ H^0_n$ as function of $n$ for fixed strength. It is clear that $ H^0_n$ increases with $x$($n$) to reach its maximum value of 0.5 as $ x (n) \rightarrow \infty$, as we intuitively expect.
\begin{figure}[!ht]
\includegraphics[width=0.51\linewidth]{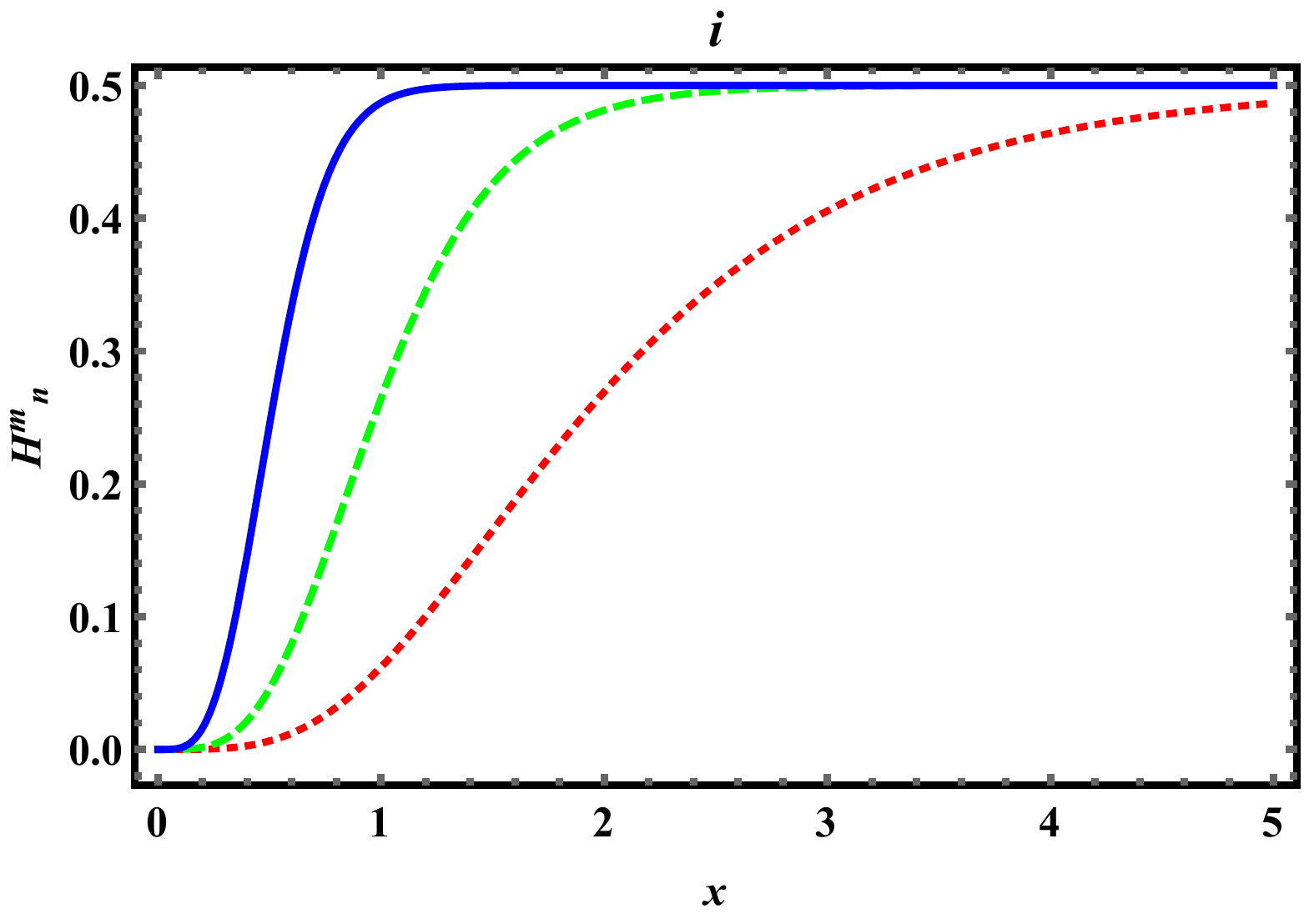}
\includegraphics[width=0.51\linewidth]{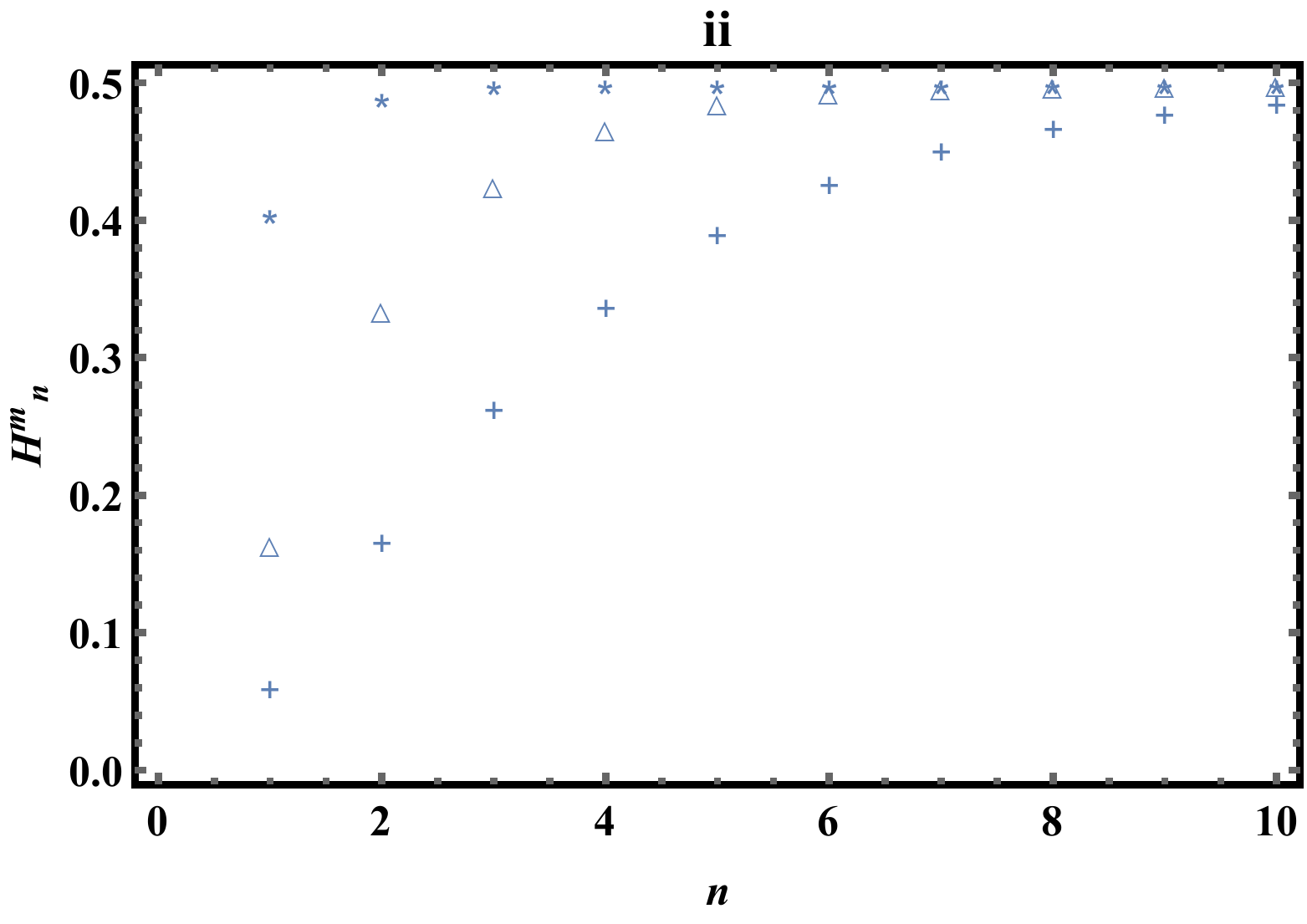}
\caption{(color online) (i) $H^0_n$ for $ n = 1\,(\text{dotted}), 3\,(\text{dashed}), 10 \,(\text{solid})$, (ii) $H^0_n$ for $ x = 1 (+), 1.5(\triangle), 3 (*).$}
\label{weakpic}
\end{figure}
\section{Relation with other correlation measures}
In this section we are establishing relations between H-MIN and its weak counterpart with other correlation measures.
\subsection{Skew-MIN} 
Skew information of state $\rho$ on a hermitian operator $ \mathcal{O}$ is defined as \cite{skew_yanse}
\begin{align*}
  \mathcal{I}(\rho,\mathcal{O}) = -\frac{1}{2}\text{tr}[\sqrt{\rho},\mathcal{O}]^2
\end{align*}
where $[a,b]$ stands for the commutation between the operators $a$ and $b$.
MIN based on skew information (Skew-MIN) is defined as \cite{skew}
\begin{align}\nonumber
  \mathcal{N}_s(\rho) &= ~^{\text{max}}_{\Pi^a} \sum_k \mathcal{I}(\rho,\Pi^a_k \otimes \mathbb{I}) \label{skew_min}\\ \nonumber
 &=\, -\frac{1}{2}~^{\text{max}}_{\Pi^a} \sum_k \text{tr} [\sqrt{\rho},\Pi_k^a\otimes \mathbb{I}]^2 \\ \nonumber
 &=\, 1 - \,^{\text{min}}_{\Pi^a} \text{tr}\left(\sqrt{\rho}\,\Pi^a(\sqrt{\rho})\right)\\
 &=\, N_H(\rho).
\end{align}
This implies that skew-MIN and H-MIN are one and the same. Defining weak skew-MIN (WS-MIN) as 
\begin{align}
\mathcal{N}_w(\rho) =&  -\,\frac{1}{2} ~^{\text{max}}_{\Omega}\,\text{tr} [\sqrt{\rho},\Omega\otimes\mathbb{I}]^2\\ \nonumber
                    =&  \,\left(1 -\, ^{\text{min}}_{\Omega}\text{tr} [\sqrt{\rho}\,(\Omega\otimes\mathbb{I})\sqrt{\rho}\,(\Omega\otimes\mathbb{I})]\right)\\ \nonumber
                    =&  \,^{\text{max}}_{\Omega}\lVert \sqrt{\rho} - \Omega(\sqrt{\rho})\rVert^2\\ \nonumber
                    =& \,N_w(\rho)
\end{align} \label{weak_skew} 
showing the equivalence of WS-MIN and WH-MIN as well.                                        
\subsection{Affinity-MIN}
Affinity is another quantity to measure the closeness  between two probability distributions, which is defined as \cite{heli_dist}
\begin{eqnarray*}
  \mathcal{A}(p,q) = \sum_x \sqrt{p(x)}\sqrt{q(x)}
\end{eqnarray*}
where $ p(x)$ and $q(x)$ are probability distributions. Upon extending this measure to the quantum regime,
by replacing the probability distributions with density matrices, the closeness between quantum states $ \rho$ and $ \sigma$ is quantified as 
\begin{equation*}
  \mathcal{A}(\rho,\sigma) = \text{tr}(\sqrt{\rho}\sqrt{\sigma}).
\end{equation*}
With this one can define MIN based on affinity (Affinity-MIN) as \cite{affinity}
\begin{align}
  N_{\mathcal{A}}(\rho) &= \left(1-\,^\text{min}_{\Pi^a} \text{tr}[\sqrt{\rho}\sqrt{\Pi^a(\rho)}]\right).
\end{align}
If $ \sqrt{\Pi(\rho)} = \Pi(\sqrt{\rho})$, which is the case for eigen projectors, we have 
\begin{equation*}
   N_{\mathcal{A}}(\rho) = N_H(\rho).
\end{equation*}
\section{Conclusion}
In this paper, we have proposed Hellinger distance based measurement-induced nonlocality (H-MIN) to quantify the quantum correlation of bipartite states. Since H-MIN involves square root of $\rho$, it naturally resolves the local ancilla problem. A closed formula of H-MIN for $2 \times n$ system is obtained, from which we have computed the same for few well-known families of quantum states. The computed results are shown to be consistent with the corresponding MIN. Further, our analysis reveals the equivalence of H-MIN and Skew-MIN, and is unaffected by the weak measurement as well. It is shown that repeated weak measurement on a given state will asymptotically lead to post-measured state.
\section*{Acknowledgement}
The author RM thank the Council of Scientific and Industrial Research (CSIR), Government of India for the financial support under Grant No. 03(1444)/18/EMR-II.
\bibliographystyle{99}

\end{document}